\renewcommand*{\backref}[1]{}
\renewcommand*{\backrefalt}[4]{%
\ifcase #1%
\marginpar{\tiny no cite}
\or
 $\rightarrow$~p.~#2.%
\else
  $\rightarrow$~pp.~#2.%
\fi
}
\tikzstyle{thickline} = [line width=1.8pt]
\tikzstyle{gl} = [draw, gray]
\tikzstyle{pl} = [draw, orange]
\tikzstyle{ml} = [draw, blue]
\tikzstyle{sett} = [draw, thick, purple]
\tikzstyle{ele} = [draw, thick, green!70!black]
\newcommand{\gettikzxy}[3]{%
  \tikz@scan@one@point\pgfutil@firstofone#1\relax
  \edef#2{\the\pgf@x}%
  \edef#3{\the\pgf@y}%
}
\newtheorem{construction}{Construction}
\newtheorem{theorem}{Theorem}
\newtheorem{claim}{Claim}
\theoremstyle{definition}
\crefname{table}{Table}{Tables}
\crefname{figure}{Figure}{Figures}
\crefname{theorem}{Theorem}{Theorems}
\crefname{definition}{Definition}{Definitions}
\crefname{corollary}{Corollary}{Corollaries}
\crefname{observation}{Observation}{Observations}
\crefname{lemma}{Lemma}{Lemmas}
\crefname{example}{Example}{Examples}
\crefname{reduction}{Reduction}{Reductions}
\crefname{construction}{Construction}{Constructions}
\crefname{subsection}{Subsection}{Subsections}
\crefname{section}{Section}{Sections}
\crefname{proposition}{Proposition}{Propositions}
\crefname{algorithm}{Algorithm}{Algorithms}
\crefname{claim}{Claim}{Claims}
\tikzstyle{alter} = [draw, circle, minimum size=4ex, inner sep=1pt, text centered, align=center]
\tikzstyle{nn} = [draw, circle, inner sep=.7pt,fill=black]
\tikzstyle{dnode} = [nn, rectangle, blue, fill=blue]
\newcommand{\decprob}[3]{%
  \begin{center}%
    \begin{minipage}{0.9\linewidth}%
      \textsc{#1}\\
      \textbf{Input:} #2\\
      \textbf{Question:} #3
    \end{minipage}%
  \end{center}%
}
\newcommand{\Clique}{\textsc{Clique}}
\newcommand{\misr}{\ensuremath{\mathsf{\rho}}}
\newcommand{\repre}{\ensuremath{\omega}}
\newcommand{\rank}{\ensuremath{\mathsf{rk}}}
\newcommand{\RR}{\ensuremath{\mathcal{R}}}
\newcommand{\ppp}{\ensuremath{\mathcal{P}}}
\newcommand{\bound}{\ensuremath{\beta}}
\newcommand{\CC}{\textsc{CC}}
\newcommand{\Monroe}{\textsc{Monroe}}
\newcommand{\mw}{\textsc{Multi-Winner}}
\newcommand{\MW}{{Multi-Winner}}
\newcommand{\myemph}[1]{{\color{green!40!black}\emph{#1}}}
\newcommand{\seq}[1]{\ensuremath{\langle #1 \rangle}}
\newcommand{\enn}{\ensuremath{\hat{n}}}
\newcommand{\emm}{\ensuremath{\hat{m}}}
\newcommand{\kk}{\ensuremath{\mathsf{s}}}
\newcommand{\hide}[1]{}
\title{Parameterized Intractability for \MW{} Election under the~Chamberlin-Courant Rule and the Monroe Rule}
\author{Jiehua Chen \and Sanjukta Roy}
\date{TU Vienna, Austria\\
  \texttt{\{jiehua.chen, sanjukta.roy\}@tuwien.ac.at}}
\begin{document}

\maketitle

\begin{abstract}
  Answering an open question by Betzler et al. [Betzler et al., JAIR~'13], we resolve the parameterized complexity of the multi-winner determination problem under two famous representation voting rules: the \emph{Chamberlin-Courant} (in short CC) rule~[Chamberlin and Courant, APSR~'83] and the \emph{Monroe} rule~[Monroe, APSR~'95].
We show that under both rules, the problem is W[1]-hard with respect to the sum~$\bound$ of misrepresentations, thereby precluding the existence of any $f(\bound)\cdot |I|^{O(1)}$-time algorithm, where $|I|$ denotes the size of the input instance. 
\end{abstract}

\section{Introduction}

In the \CC-\mw{} problem, we are given a preference profile for a set~$V$ of voters and a set~$C$ of alternatives, and a number~$k$.
We aim to find an assignment~$\repre\colon V\to C$ of the voters to the alternatives such that $|\repre(V)|=k$ and the sum of misrepresentations of~$\repre$ is minimum.
Here, a \myemph{preference profile} %
is a tuple~$(C,V,\RR)$, where \myemph{$C$} denotes a set of $m$~alternatives with $C=\{c_1,\cdots, c_m\}$,
\myemph{$V$} denotes a set of $n$ voters with $V=\{v_1,\cdots, v_n\}$,
and \myemph{$\RR$} is a collection of $n$ preference orders~$\RR=(\succ_1,\cdots, \succ_n)$ such that
each voter~$v_i\in V$ has a preference order~$\succ_i$ over~$C$ in~$\RR$. 
A \myemph{preference order} over~$C$ is just a linear order on~$C$.
Given $C=\{1,2,3\}$, a possible preference order over~$C$ may be $2 \succ 3 \succ 1$, which
indicates that $2$ is preferred to~$3$ and $1$, and $3$ is preferred to~$1$.
For a preference order~$\succ_i$ of voter~$v_i$ and an arbitrary alternative~$c_j\in C$, the \myemph{rank} of $c_j$ in~$\succ_i$ is the number of alternatives that are preferred to~$c_j$ and we use \myemph{$\rank_{v_i}(c_j)$} to refer to this number, i.e., \myemph{$\rank_{v_i}(c_j)=|\{c_z \in C \mid c_z\succ_i c_j\}|$}.
The \myemph{sum of misrepresentations} of $\repre$ is defined as the sum of the ranks of the assigned alternatives, i.e., \myemph{$\misr(\ppp, \repre) = \sum_{v_i\in V}\rank_{v_i}(\repre(v_i))$}.

In the \Monroe-\mw{} problem, the input is the same as the one for \CC-\mw{}, but the task is to find a \emph{proportional} assignment~$\repre$ such that $|\repre(V)|=k$ and the sum of misrepresentations is minimum.
Assignment~$\repre$ is \myemph{proportional} if each assigned alternative represents roughly the same portion of the voters,
i.e., $\lfloor |V| / k \rfloor \le |\{v\in V\mid \repre^{-1}(c)\}|\le \lceil{|V|/k\rceil}$ holds for all~$c\in \repre(V)$.

Both \CC-\mw{} and \Monroe-\mw{} are NP-hard~\cite{lu2011budgeted} and contained in XP
with respect to the minimum sum~$\bound$ of misrepresentations of any valid assignment~\cite{BetzlerMW2013}
since they can be solved in $m^\bound\cdot (n+m)^{O(1)}$ time by guessing the alternatives (assigned to the voters) which each induces positive misrepresentations.
It was an open question whether the dependence of the exponent of the running time  of~$\bound$ can be removed to obtain a fixed-parameter algorithm for~$\bound$~\cite{BetzlerMW2013}.
We answer this open question negatively, by showing W[1]-hardness for~$\bound$.

\begin{theorem}\label{thm:CC-W1h}
  \CC-\mw{} is W[1]-hard with respect to the sum of misrepresentations~$\bound$.
\end{theorem}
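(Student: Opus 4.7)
The plan is to prove \Cref{thm:CC-W1h} by a parameterized reduction from \textsc{Multicolored Clique}, which is \WOH{} in the clique size~$k$: given a graph $G=(V,E)$ with a vertex partition $V=V_1\sqcup\cdots\sqcup V_k$, decide whether $G$ contains a clique using one vertex from each~$V_i$. The reduction would produce a \CC-\mw{} instance with budget $\bound=O(k^2)$. The alternative set would contain a vertex alternative $a_v^i$ for each $v\in V_i$ plus a small family of auxiliary \emph{anchor} alternatives; the number of winners is $k+g(k)$ for a small function~$g$, and the intended winner set consists of $k$ vertex alternatives (one per class) together with all anchors.

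Two voter families drive the construction. \emph{Class-selection voters}, one gadget per class~$i$, would have preferences forcing the winner set to contain exactly one $a_v^i$ with $v\in V_i$; otherwise one such voter is pushed past a high-rank penalty block in its preference and the budget is exceeded. \emph{Pair-verification voters}, one gadget per pair $(i,j)$ with $i<j$, would have preferences so that if the chosen $a_u^i$ and $a_w^j$ satisfy $\{u,w\}\in E$ every gadget voter is represented at rank $O(1)$, whereas $\{u,w\}\notin E$ forces at least one voter past its share of the budget.

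The main obstacle is keeping $\bound$ bounded by a function of~$k$ alone, independent of $|V|$ and $|E|$. The naive pitfall is that a gadget voter whose preferred top is not selected falls back into the high-rank tail of its preference and accrues $\Omega(|V|)$ cost, destroying the parameter. I would circumvent this by placing the per-class and per-pair anchor alternatives at rank~$1$ or~$2$ of the corresponding gadget voters: since the anchors lie in every low-cost winner set, every gadget voter always has a cheap common fallback, so each contributes $O(1)$ rank and the total stays $O(k^2)$. Correctness would then be argued in two directions: (forward) a multicolored clique $\{v_1^*,\ldots,v_k^*\}$ together with the anchors forms a winner set $S$ whose natural assignment~$\repre$ satisfies $\misr(\ppp,\repre)\le\bound$ by direct summation; (backward) any assignment of value at most~$\bound$ must select all anchors, one vertex alternative $a_{v_i^*}^i$ per class (by the class-selection gadgets), and these $v_i^*$ must be pairwise adjacent in~$G$ (by the pair-verification gadgets), yielding the desired multicolored clique.
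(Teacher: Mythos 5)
There is a genuine gap at the heart of your construction: the two mechanisms you propose cancel each other out. On one hand you want a class-selection gadget whose voters are ``pushed past a high-rank penalty block'' unless some $a_v^i$ with $v\in V_i$ is selected; on the other hand you give those same voters an anchor at rank $1$ or $2$ that lies in every low-cost committee. If the anchor is always available at rank at most $2$, the voter's misrepresentation is at most $2$ no matter which (if any) class-$i$ vertex alternative is chosen, so the gadget forces nothing. If you drop the anchor, the gadget is still broken for a different reason: for a bounded set of voters to accept \emph{every} choice of $v\in V_i$ at cost bounded by a function of the clique size, all $|V_i|=\Omega(n)$ alternatives $a_v^i$ would have to sit at ranks below that bound in some voter's order, which a linear order cannot accommodate. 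The same dilemma kills the pair-verification gadgets. You have correctly identified the central difficulty of the parameter (keeping $\bound$ independent of $|V|$ and $|E|$), but the proposed fix neutralizes the very forcing power the gadgets need; this is precisely why ``one gadget per class / per pair'' with a committee of size $k+g(k)$ does not yield a reduction.

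The paper escapes this trap with a structurally different design: selection is done by \emph{counting} against a near-full committee, not by per-class gadgets. Reducing from plain \Clique{} with clique size $h$, it creates one voter per vertex $u_i$ with order $d_i\succ u_i\succ\seq{A_i}\succ\cdots$ and $2h$ voters per edge $e_j=\{u_i,u_{i'}\}$ with orders $e_j\succ u_i\succ\seq{B_j}\succ\cdots$ and $e_j\succ u_{i'}\succ\seq{B_j}\succ\cdots$, where the private blockers at rank $2$ onward force every voter to be served at rank $0$ or $1$. Setting the committee size to $k=\emm-\binom{h}{2}+\enn$ forces exactly $\binom{h}{2}$ edge alternatives out of the committee, displacing $2h\binom{h}{2}$ edge voters onto their rank-$1$ vertex alternatives, and the budget $\bound=h+2h\binom{h}{2}$ together with the committee size caps the number of distinct vertex alternatives so used at $h$; hence the deselected edges span at most $h$ vertices and form a clique. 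If you want to salvage a multicolored-clique version, you would still need to adopt this ``almost everyone at rank $0$, a budgeted few at rank $1$, committee size does the selecting'' architecture rather than small selection gadgets with cheap anchors.
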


\begin{theorem}\label{thm:M-W1h}
  \Monroe-\mw{} is W[1]-hard with respect to the sum of misrepresentations~$\bound$.
\end{theorem}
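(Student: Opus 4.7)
The plan is to adapt the reduction used to prove \cref{thm:CC-W1h} so that it also accounts for the proportionality constraint of the Monroe rule. We reduce from a W[1]-hard problem such as Multicolored Clique parameterized by the clique size~$k$, aiming to construct a preference profile~$(C,V,\RR)$ together with a number~$k'$ and a budget $\bound = f(k)$ so that a proportional assignment of cost at most~$\bound$ exists if and only if the input graph contains a $k$-clique.

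The new ingredient specific to the Monroe setting is a family of \emph{dummy voters} designed to absorb the proportionality requirement. For each alternative~$c$ that could plausibly be chosen in an optimal solution, we introduce a dedicated block of dummy voters who rank~$c$ first, and whose subsequent positions contain only alternatives that would incur a prohibitive misrepresentation if such a dummy were assigned to them. By sizing these blocks so that the total number of voters~$n$ is roughly $k'$ times the block size, we can ensure that any proportional assignment of cost at most~$\bound$ must assign each selected alternative to (the bulk of) its dedicated block of dummies, at zero cost. The actual verification of the clique property is then carried out by a much smaller set of \emph{substantive voters}, largely mirroring the construction underlying \cref{thm:CC-W1h} and contributing the full misrepresentation of~$\bound$.

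The forward direction should be the easier one: a $k$-clique yields a proportional assignment by selecting the corresponding vertex-alternatives, covering their dummy blocks at zero cost, and distributing the substantive voters among them exactly as in the CC reduction. The main obstacle lies in the reverse direction, where we must rule out low-misrepresentation proportional assignments that do not encode a clique. The difficulty is that proportionality forces an almost-even distribution of voters among the $k'$ chosen alternatives, so the number of dummies per block and the number of substantive voters per vertex-alternative must be tuned simultaneously: the block sizes must be large enough that any deviation from the intended matching between selected alternatives and dummy blocks is either infeasible under proportionality or exceeds the budget~$\bound$, yet small enough that the substantive voters still fit within the $\lceil n/k' \rceil$ quota of each selected alternative. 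Reconciling these counts -- while preserving the CC-style gadget analysis that extracts the clique from the choice of the $k'$ vertex-alternatives -- will be the main technical challenge of the proof.
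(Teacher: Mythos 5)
There is a genuine gap, and it lies exactly where you point: the construction of the dummy blocks. Your design gives every ``plausibly chosen'' alternative an equal-sized block of dedicated voters whose lower positions are all prohibitively expensive. This cannot work for two reasons. First, Monroe's proportionality is a two-sided constraint with slack at most~$1$ between $\lfloor n/k'\rfloor$ and $\lceil n/k'\rceil$, while in the CC-style gadget the substantive voters load the selected alternatives very unevenly: a clique vertex alternative picks up its vertex voter plus $\Theta(h^2)$ edge voters, a dummy alternative picks up a single vertex voter, and an edge alternative picks up $2\kk_2$ edge voters. Equal-sized blocks leave this $\Theta(h^2)$ imbalance intact, so the intended solution is simply not proportional; ``fitting within the quota'' is not enough, the per-alternative counts must be \emph{equalized exactly}. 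Second, if every candidate alternative has a block that can only be served by its own top choice, then every such alternative is forced into the committee --- but the reduction's whole selection logic requires that exactly one of each pair $\{u_i,d_i\}$ and only $\emm-\binom{h}{2}$ of the edge alternatives be chosen. Forcing all blocks in destroys the encoding of the clique.

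The paper's proof supplies the two ingredients your plan is missing. It pads each alternative \emph{type} with a different number of extra voters so that every selected alternative represents exactly $L=\kk_1+\kk_2(h-1)$ voters, and it arranges $|V|=L\cdot k$ exactly (not ``roughly''), so proportionality pins every selected alternative to exactly~$L$ representatives. Crucially, the padding voters are \emph{not} given prohibitive second choices: the voters padding $d_i$ rank an auxiliary alternative $x_z$ second, and those padding $e_j$ rank an auxiliary $y_z$ second. The auxiliaries in $X\cup Y$ carry their own partial blocks of special voters sized so that they are always selected and absorb, at rank~$1$, precisely the overflow padding from the alternatives that end up unselected; this overflow contributes an extra additive term $\binom{h}{2}\cdot L$ to $\bound$, which remains a function of~$h$ alone. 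Without this second-choice escape route for the padding voters, any concrete instantiation of your blocks either violates proportionality or blows past the budget, so the reverse direction you flag as ``the main technical challenge'' is not merely unfinished --- the construction as described cannot be completed.
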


\section{Related Works}

The computational and algorithmic complexity study of representation voting rules for selecting multi-winners or a committee began two decades ago~\cite{procaccia2008complexity,potthoff1998proportional,lu2011budgeted}.
Potthoff and Brams~\cite{potthoff1998proportional} formulate the \Monroe-\mw{} and \CC-\mw{} problem via Integer Linear Programming, implying that both problems are contained in NP.
Lu and Boutilier~\cite{lu2011budgeted} showed NP-hardness for \CC-\mw{}.
Procaccia et al.~\cite{procaccia2008complexity} show that \CC-\mw{} and \Monroe-\mw{} are trivially contained in XP with respect to the number~$k$ of multi-winners since there are $\binom{m}{k}$ possible size-$k$ subsets of alternatives to be checked, where $m$ is the total number of alternatives. 

In terms of approximability results, Skowron et al. \cite{SFS2015} showed that it is NP-hard to approximate up to any constant factor for minimum misrepresentation under both rules.

 In terms of parameterized complexity, Betzler et al.~\cite{BetzlerMW2013} provide an extensive multi-variate complexity investigation of \CC-\mw{} and \Monroe-\mw{}, and their minimax variants.
The parameters that they consider are the number~$n$ of voters, the number~$m$ of alternatives,
the number~$k$ of multi-winners,
and the sum of misrepresentations. 
They show that both \CC-\mw{} and \Monroe-\mw{} are parameterized intractable with respect to~$k$, but are fixed-parameter tractable with respect to~$n$, or $m$, or $(k+\bound)$.
The only missing piece of the parameterized result puzzle is with respect to~$\bound$ alone,
i.e., whether a fixed-parameter algorithm with running time~$f(\bound)\cdot (n+m)^{O(1)}$ exists.
We settle this remaining question and show that such an algorithm does not exist unless FPT${}={}$W[1].

\section{Preliminaries}
For an integer~$t$, let $[t]$ denote the subset~$\{1,\cdots,t\}$.
We show all negative results via parameterized reductions from the W[1]-hard \Clique{} problem parameterized by the size of the clique~\cite{DF13}.
\decprob{\Clique}
{An undirected graph~$G=(U,E)$ and a non-negative integer~$h$.}
{Does $G$ admit a \myemph{clique} of size~$h$, i.e., a size-$h$ vertex subset~$U'\subseteq U$ such that the induced subgraph~$G[U']$ is complete?}

Since the reductions share a common construction, we first describe this construction as follows.
Briefly put, given an instance~$(G, h)$ of \Clique{}, we construct two gadgets, one for the vertices and the other for the edges.
No voters in the vertex gadget share the same most-preferred alternative with any voters in the edge gadget,
but they share the same second-ranked alternative if the corresponding vertex (of the voter in the vertex gadget) and edge (of the voter in the edge gadget) are incident to each other.
We add private blocker alternatives and append them to the voters' preference orders (starting from rank two) to ensure that for small sums of misrepresentations, each voter must be assigned an alternative with rank either zero or one.
Together with the number~$k$ of multi-winners, this ensures that $\binom{h}{2}$ groups of voters must be assigned to alternatives with rank one.
These voters correspond to $\binom{h}{2}$ edges~$E'$ and by the preferences, the alternatives that are assigned to the voters must correspond to the incident vertices of~$E'$.
The vertex gadget, together with~$k$ and the bound~$\bound$ (which is a function in~$h$) on the sum of misrepresentations, ensures that at most $h$ such incident vertices can be selected.
Altogether, this works only if $E'$ induces a clique of size~$h$.
\cref{constr:Base} gives the formal definition of the construction.
For the Monroe rule, we will extend \cref{constr:Base} to make sure that the desired assignment is proportional.

\begin{construction}\label{constr:Base}
Let $(G, h)$ denote an instance of \Clique\ with $G=(U,E)$ such that $U=\{u_1,\cdots,u_{\enn}\}$ and $E=\{e_1,\cdots, e_{\emm}\}$.

For each vertex~$u_i\in U$, we create a \myemph{vertex alternative}, called~$u_i$, a dummy vertex, called~$d_i$, and a set~$A_i$ of $\enn$ \myemph{blocker alternatives}. %
 Define $D\coloneqq \{d_i \mid i \in [\enn]\}$. %
 For each edge~$e_j\in E$,
 we create an \myemph{edge alternative}, called~$e_{j}$,
 and a set~$B_j$ of $\emm$ \myemph{blocker alternatives}. %
The role of the blocker alternatives is to ensure that only the vertex alternatives and edge alternatives will be selected in the target assignment.
 Note that we use $u_i$ (resp.\ $e_j$) for both the vertices and the vertex alternatives (resp.\ the edges and the edge alternatives) for better association; the meaning of these variables will be clear from the context.
 Together, the set~$C$ of alternatives is defined as
 \[C\coloneqq U\cup D\cup \bigcup\limits_{i\in [\enn]}A_i\cup E \cup \bigcup\limits_{j\in [\emm]}B_j.\]

 Before we describe the voters and their preferences, let us introduce some notation. 
 Given a subset of alternatives~$X$, the notation~``\myemph{$\seq{X}$}'' refers to an arbitrary but fixed ordering of the alternatives in~$X$, and the notation~``\myemph{$\cdots$}'' refers to an arbitrary but fixed ordering of the remaining alternatives not mentioned explicitly.
 Further, $\kk_1$ and $\kk_2$ are two numbers that only depend on $h$ (to be determined in the concrete reductions).
 Now, we create two groups of voters, one for the vertices and one for the edges, as follows:
 \begin{itemize}[--]
   \item For each vertex~$u_i\in U$, create $\kk_1$ \myemph{vertex voters}, called~$v^z_i$, $z\in [\kk_1]$, who each have the same preference order:
   $v_i^z\colon d_i\succ u_i \succ \seq{A_i} \succ \cdots$.
   \item For each edge~$e_j\in E$, let $u_i$ and $u_{i'}$ denote the two endpoints of~$e_j$.
   Then, create $2\kk_2$ \myemph{edge voters}, called~$w^z_j(u_i)$ and $w^z_j(u_{i'})$, $z\in [\kk_2]$, such that
   all voters~$w^z_j(u_{i})$ have the same preference order:
   $w^z_j(u_i)\colon e_j\succ u_i\succ \seq{B_j}\succ \cdots$,
   and  all voters~$w^z_j(u_{i'})$ have the same preference order:
   $w^z_j(u_{i'})\colon e_j\succ u_{i'}\succ \seq{B_j}\succ \cdots$.
 \end{itemize}
 In total, the voter set~$V$ is defined as
 \[V\coloneqq \{v_i^z\mid i\in [\enn], z\in [\kk_1]\}\cup \{w^z_j(u_i), w^z_j(u_{i'})\mid j \in [\emm] \text{ with } e_j=\{u_i,u_{i'}\}, z\in [\kk_2]\}.\]
\end{construction}

\section{Proof for \cref{thm:CC-W1h}}\label{sec:proof-CC-W1h}
In order to show W[1]-hardness, we provide a single parameterized reduction from \Clique.
Given an instance~$(G,h)$ of \Clique{} with $G=(U,E)$ and $U=\{u_1,\cdots, u_{\enn}\}$ and $E=\{e_1,\cdots, e_{\emm}\}$, let $\ppp=(C, V, \RR)$ denote the preference profile computed in \cref{constr:Base}.
Recall that for each vertex, the profile consists of $\enn$ vertex voters and for each edge it consists of $\emm$ edge voters.
Without loss of generality, assume that $\enn, \emm \gg h^3$ and $h \ge 3$ so that $\binom{h}{2} > 2$. %
To complete the construction, we set $\kk_1\coloneqq 1$ and $\kk_2\coloneqq h$.
Finally, we set the number~$k$ of desired alternatives and the misrepresentation bound~$\bound$ to be~$k\coloneqq \emm-\binom{h}{2} + \enn$ and \myemph{$\bound\coloneqq \kk_1\cdot h + \kk_2 \cdot 2\binom{h}{2}$}, respectively.
Clearly, $\bound < \min(\enn, \emm)$ and the whole construction can be done in polynomial time.
 
It remains to show the correctness. More precisely, we show the following.
\begin{enumerate}[(i)]
 \item\label{CC-forward} If the input graph~$G$ admits a size-$h$ clique, then $\ppp'$ admits an assignment~$\repre\colon V \to C$ such that
 $|\repre(V)| = k$ and $\misr(\repre, \ppp) \le \bound$.
 \item\label{CC-backward} If $\ppp'$ admits an  assignment~$\repre\colon V \to C$ such that
 $|\repre(V)| = k$ and $\misr(\repre, \ppp) \le \bound$,
 then the input graph~$G$ admits a size-$h$ clique. %
\end{enumerate}

\noindent For the forward direction~\eqref{CC-forward}, assume that $G$ admits a size-$h$ clique~$U'\subset U$. 
 Then, define an assignment~$\repre\colon V\to C$ as follows:
 \begin{itemize}[--]
   \item For each vertex voter~$v_i^z$, $i\in [\enn]$ and $z\in [\kk_1]$,
   let $\repre(v_i^z)\coloneqq u_i$ if $v_i\in V'$; otherwise let $\repre(v_i^z)\coloneqq d_i$.
   \item For each edge~$e_j \in E$, let $u_i$ and $u_{i'}$ denote the endpoints of~$e_j$. %
   Then, for each $z\in [\kk_2]$,
   let $\repre(e_j^z(u_i)) \coloneqq u_i$ and $\repre(e_j^z(u_{i'}))\coloneqq u_{i'}$ if $e_j\in E(G[U'])$ (i.e., both endpoints of $e_j$ are in the clique); otherwise
   $\repre(e_j^z(u_i)) = \repre(e_j^z(u_{i'})) \coloneqq e_j$. 
 \end{itemize}
 It is straightforward to verify that $\repre$ is the desired assignment.

 \noindent For the backward direction~\eqref{CC-backward}, assume
 that 
 $\ppp$ admits an assignment~$\repre$ with $|\repre(V)|=k$ and $\misr(\ppp, \repre)\le \bound$.
 We aim to show that $G$ admits a size-$h$ clique.
 Let $\repre$ denote an assignment with minimum sum of misrepresentations, implying that for any other assignment~$\repre'$ with $|\repre'(V)|=k$ it holds that $\misr(\ppp,\repre) \le \misr(\ppp,\repre')$.
 First, we analyze the alternatives in $\repre(V)$ and observe the following.
 \begin{claim}\label{claim:relevant-alts}
   For each voter~$v \in V$ it holds that $\rank_v(\repre(v)) \le 1$.
 \end{claim}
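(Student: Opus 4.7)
The plan is to argue by contradiction via a single-alternative exchange, exploiting the minimality of $\misr(\ppp,\repre)$ together with the padding $\bound\le h+2h\binom{h}{2}\ll \min(\enn,\emm)$ built into \cref{constr:Base}. Suppose for contradiction that some voter $v^*$ satisfies $\rank_{v^*}(\repre(v^*))\ge 2$, and write $c^*\coloneqq \repre(v^*)$. I split into the two symmetric cases $v^*=v_i^1$ (vertex voter) and $v^*=w_j^z(u)$ (edge voter).

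The first step in both cases is to localize $c^*$ inside $v^*$'s private blocker block: since $\rank_{v^*}(c^*)\le \misr(\ppp,\repre)\le \bound$ and anything appearing past $\seq{A_i}$ (resp.\ $\seq{B_j}$) in $v^*$'s preference already has rank exceeding $\bound$, I would conclude $c^*\in A_i$ in the vertex case and $c^*\in B_j$ in the edge case. The crux is then to isolate $c^*$ inside its owner group: for every voter $v'$ outside the group (just $\{v_i^1\}$ in the vertex case, the $2h$ edge voters of $e_j$ in the edge case), the blocker $c^*$ sits in $v'$'s trailing $\cdots$ portion at rank $\ge \min(\enn,\emm)+2>\bound$, so assigning $c^*$ to such a $v'$ would single-handedly blow the misrepresentation budget. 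The same argument simultaneously localizes $\repre^{-1}(d_i)\subseteq \{v_i^1\}$ in the vertex case and $\repre^{-1}(e_j)\subseteq \{\text{edge voters of }e_j\}$ in the edge case.

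With these localizations in hand, the vertex case is immediate: the singleton owner group forces $\repre^{-1}(c^*)=\{v_i^1\}$ and $d_i\notin \repre(V)$ (else the only candidate assignee $v_i^1$ would take $d_i$, contradicting $\rank_{v^*}(c^*)\ge 2$). Defining $\repre'(v_i^1)\coloneqq d_i$ and $\repre'(v)\coloneqq\repre(v)$ elsewhere preserves $\repre'(V)=(\repre(V)\setminus\{c^*\})\cup\{d_i\}$ of size $k$, while dropping $v_i^1$'s rank from at least two to zero, contradicting the minimality of $\repre$. In the edge case I would perform the analogous swap $c^*\mapsto e_j$ on every voter in $\repre^{-1}(c^*)$, obtaining the same strict improvement whenever $e_j\notin \repre(V)$.

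The main obstacle will be the owner-group isolation in the second step, which is precisely where the construction's choice $\enn,\emm\gg h^3\ge\bound$ pays off. A secondary subtlety in the edge case is that $e_j$ may already lie in $\repre(V)$ (assigned to a different edge voter of $e_j$); I plan to handle this by observing that if $v^*$ is not the sole assignee of $c^*$ under $\repre$, then reassigning $v^*\mapsto e_j$ directly preserves the image and improves misrepresentation by at least two, while if $v^*$ is the sole assignee of $c^*$, the inequality $|\repre(V)|=\enn+\emm-\binom{h}{2}<\enn+\emm$ guarantees an unused rank-$0$ alternative in $(D\cup E)\setminus\repre(V)$ that can be swapped in for $c^*$ alongside the reassignment, again producing the needed strict improvement and contradicting minimality.
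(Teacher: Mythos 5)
Your overall strategy---a local exchange contradicting the minimality of $\repre$, powered by the fact that each blocker, each $d_i$, and each $e_j$ is ranked within the budget $\bound$ only by its owner group---is the same one the paper uses, and your vertex case is actually cleaner than the paper's: since $\kk_1=1$, your observation that $\repre^{-1}(d_i)\subseteq\{v_i^1\}$ immediately yields $d_i\notin\repre(V)$, so the swap $c^*\mapsto d_i$ preserves $|\repre(V)|=k$ and strictly improves the misrepresentation. That part is complete, and it short-circuits the paper's more roundabout hunt for an index $x$ with $d_x,u_x\notin\repre(V)$.

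The gap is in your last edge sub-case: $e_j\in\repre(V)$ and $v^*$ is the sole assignee of $c^*\in B_j$. You assert that an unused rank-$0$ alternative $c'\in(D\cup E)\setminus\repre(V)$ ``can be swapped in for $c^*$ alongside the reassignment,'' but putting $c'$ into the image requires reassigning some voter $v'$ to $c'$, and if $v'$ is itself the sole assignee of $\repre(v')$, then that alternative drops out of the image and you are still left with an image of size $k-1$: no valid assignment is produced and no contradiction follows. For instance, if every missing alternative is some $d_x$ with $\repre(v_x^1)=u_x$ and $v_x^1$ the sole assignee of $u_x$, your move merely trades $u_x$ for $d_x$. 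The sub-case can be closed, but only with an extra argument you do not supply: if some $e_{j'}\in E\setminus\repre(V)$ exists, then among its $2\kk_2=2h$ edge voters either one sits on a blocker of $B_{j'}$ (so your first edge sub-case applies to \emph{that} voter instead), or by pigeonhole over the two endpoints two of them share an assigned rank-$1$ alternative, giving a non-sole assignee to move onto $e_{j'}$ at a cost of $-1$; and if $E\subseteq\repre(V)$, then since $c^*$ occupies a slot of the image, $|(D\cup U)\cap\repre(V)|\le \enn-\binom{h}{2}-1<\enn$, so some index $x$ has $d_x,u_x\notin\repre(V)$, forcing $v_x^1$ onto a blocker of $A_x$ and reducing to your (complete) vertex case. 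This is essentially the case distinction the paper carries out explicitly for vertex voters and only sketches for edge voters; without it, your final sub-case does not go through as written.
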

 \begin{proof}\renewcommand{\qedsymbol}{(of
     \cref{claim:relevant-alts})~$\diamond$}
   Suppose, towards a contradiction, that there exists a voter~$v\in V$ such that $\rank_v(\repre(v)) > 1$.
   Then, such a voter is either a vertex voter or an edge voter.
   If this is a vertex voter, say~$v_i^z$, then there must be an index~$x\in [\enn]$ such that neither $d_x$ nor $u_x$ is contained in~$\repre(V)$.
   If $u_i, d_i\notin \repre(V)$, then set $x$ is such a desired index. 
   Otherwise, we have that $u_i\in\repre(V)$ or $d_i\in \repre(V)$.
   If more than two voters are assigned to~$\repre(v_i^z)$,
   then it is straightforward to verify that reassigning $d_i$ (resp.\ $u_i$) to $v_i^z$ (keeping the remaining assignment intake) decrease the sum of misrepresentations.
   This implies that only $v_i^z$ is assigned to $\repre(v_i^z)$.
   Then, we claim that there exists another assignment with smaller sum of misrepresentations than~$\repre$,
   again contradicting our choice of $\repre$.
   We distinguish between two cases.
   If $E\nsubseteq \repre(V)$, meaning that there exists an edge voter~$e_j\in E\setminus \repre(V)$, then let $w\in \{w_j^{z'}(u_x), w_j^{z'}(u_y) \mid e_j =\{u_x,u_{y}\}, z'\in [\kk_2]\}$. 
   It is straightforward to verify that the following reassignment~$\repre'$ satisfies $|\repre'(V)|=k$ and has smaller sum of misrepresentations than~$\repre$; recall that $\kk_2=h>1$ and no voter other than $v_i^z$ is assigned $\repre(v_i^z)$: 
   For all $v\in V$, define
   \begin{align*}
     \repre(v) \coloneqq 
     \begin{cases}
       \repre'(v) = d_i, &  \text{ if } v = v_i^z,\\
       \repre'(v) = e_j, & \text{ if } v = w, \\
       \repre'(v), & \text{ otherwise.}
     \end{cases}
   \end{align*}

   If $E\subseteq \repre(V)$, then $|(D\cup U)\cap \repre(V)| \le n-\binom{h}{2} < n$.
   This means that there exists an index~$x \in [\enn]\setminus \{i\}$ with $d_x, u_x\notin \repre(V)$, as desired.

   We just showed that $x\in [\enn]$ is an index such that neither $d_x$ nor $u_x$ is contained in $\repre(V)$.
   Since $\misr(\ppp,\repre) \le \bound < \min(\emm, \enn)$, for each $z\in [\kk_1]$,
   vertex voter~$v_x^z$ must be assigned
   to an associated blocker alternative~$a\in A_x$.
   Now, observe that $|A_{i'}|= \enn > \bound$ holds for all~$i' \in [\enn]$
   and $|B_j|=\emm > \bound$ holds for all~$j\in [\emm]$,
   so no voter other than that from~$V_x\coloneqq \{v_x^{z}\mid z\in [\enn]\}$ is assigned to~$a$ as otherwise the sum of misrepresentations exceeds~$\bound$.
   Thus, we can replace~$a$ with $d_x$ in the assignment of every voter in~$V_x$ so to decrease the sum of misrepresentations, a contradiction.

   The reasoning for the case when such a voter is an edge voter works analogously.  %
 \end{proof}

 By the preferences of the vertex voters, \cref{claim:relevant-alts} immediately implies the following,
 \begin{align}
   \label{eq:relev-alts}
   \repre(V)\subseteq U \cup D \cup E.
 \end{align}

 Now, we claim that $\repre(V)$ corresponds to a size-$h$ clique.
 To this end, define $E'\coloneqq E\setminus \repre(V)$,
 $D'\coloneqq D\cap \repre(V)$, and $U'\coloneqq \{u_i\in U \cap \repre(V) \mid d_i\notin D'\}$.
 Clearly, $|U'|+|D'|\le \enn$.
 By \cref{claim:relevant-alts} and by the preferences of the vertex voters, we further infer that $|U'|+|D'| = \enn$.
 Together \eqref{eq:relev-alts}, it follows that
 \begin{align}
   \label{eq:size-rest}   |E \cap \repre(V)| = \emm-\binom{h}{2}.
 \end{align}
 This further implies that
 \begin{align}
   |E'| = \binom{h}{2}.\label{eq:size-E'}
 \end{align}

 In order to show that $G=(V,E)$ admits a size-$h$ clique,
 it suffices to show that $|E'| \ge \binom{h}{2}$ and $|\{u_i,u_{i'}\mid \{u_i,u_{i'}\} \in E'\}| \le h$.
 Let $U^*\coloneqq \{u_i,u_{i'}\mid \{u_i,u_{i'}\} \in E'\}$, by \eqref{eq:size-E'}, it remains to show that
 $|U^*| \le h$.
 Suppose, towards a contradiction, that $|U^*| > h$.
 By the misrepresentation bound~$\bound$,
 it follows that
 \begin{align*}
  \kk_1\cdot h_1 + \kk_2\cdot 2 \binom{h}{2}\ge \misr(\repre,\RR) \ge (\enn-|D'|)\cdot \kk_1 +  2|E'|\cdot \kk_2 \stackrel{\eqref{eq:size-E'}}{\ge} (\enn-|D'|)\cdot \kk_1 + 2\binom{h}{2}\cdot \kk_2.
 \end{align*}
 The second inequality holds since there are exactly $\kk_1 \cdot (\enn-|D'|)$ vertex voters and
  $2\kk_1\cdot |E'|$ edge voters that are each assigned an alternative with positive rank.
 Hence, $|D'|\ge \enn -h$.
 Further, by \cref{claim:relevant-alts}, it follows that $U^*\subseteq \repre(V)$.
 Then, by the bound on~$k$, 
 it follows that
 \begin{align*}
   \emm-\binom{h}{2}+\enn = |\repre(V)| \ge \emm - |E'| + |D'| + |U^*| > \emm - |E'| + (\enn-h) + h = \emm - |E'| + \enn.
 \end{align*}
 In other words, we have that $|E'| > \binom{h}{2}$.
 This yields a contradiction to the sum of misrepresentations:
 \begin{align*}
   \misr(\repre,\RR) \ge 2|E'|\cdot \kk_2 \ge 2(\binom{h}{2}+1)\cdot \kk_2 > 2\binom{h}{2}\cdot \kk_2 + 2\kk_2 > 2\binom{h}{2}\cdot \kk_2 + \kk_1\cdot h = \bound.
 \end{align*}
 Note that the last but one inequality holds since $\kk_1 = 1$ and $\kk_2 = h$.
 This completes the correctness proof.
 Since the bound~$\bound$ is a function depending only on~$h$,
 we indeed obtain a parameterized reduction and show that \CC-\mw{} is W[1]-hard with respect to the bound~$\bound$ on the sum of misrepresentations.

\section{Proof for \cref{thm:M-W1h}}\label{sec:proof-M-W1h}
To show hardness for the Monroe rule, we extend the instance constructed in \cref{sec:proof-CC-W1h}.
Given an instance~$(G,h)$ of \Clique{} with $G=(U,E)$ and $U=\{u_1,\cdots, u_{\enn}\}$ and $E=\{e_1,\cdots, e_{\emm}\}$, let $C$, $V$, and $\RR$ denote the set of alternatives, the set of voters, and the collection of preference orders computed in \cref{constr:Base}, respectively.
Before we detail the reduction, we analyze the number of voters represented by an alternative in a desired assignment in the original reduction.
Recall that in a desired assignment (for the Chamberlin and Courant rule),
each assigned vertex alternative represents $\kk_1$ vertex voters and $\kk_2\cdot (h-1)$ edge voters, each assigned dummy alternative represents $\kk_1$ vertex voter, while each assigned edge alternative represents~$2\kk_2$ edge voters.
Now, to adapt the reduction to work for the Monroe rule, we aim to add additional desired alternatives and new voters to the profile to enforce that each assigned alternative must represent the same number of voters.
This number is set to~$L\coloneqq \kk_1 + \kk_2\cdot (h-1)$, which is exactly the number of voters represented by an assigned vertex alternative in the original reduction.

Specifically, we extend the set of alternatives by adding some more blocker alternatives and some new voters which ensure that each desired alternative represents the same portion of voters.
For each vertex~$u_i\in U$, add to~$C$ a set~$\hat{A}_i$ of $\enn$ blocker alternatives.
For each edge~$e_j\in E$, add to~$C$ a set~$\hat{B}_j$ of $\emm$ blocker alternatives.
Add to~$C$ two sets of alternatives, $X$ and $Y$, with $X=\{x_z\mid z \in [\kk_2]\}$ and $Y=\{y_z\mid z \in [L-2\kk_2]\}$; note that $2|X|+|Y| = L$.
Finally, for each~$c\in X\cup Y$, add a set~$F_c$ of $\enn+\emm$ blocker alternatives.
In total, the set of alternatives is

\[C\coloneqq U\cup D\cup \bigcup\limits_{i\in [\enn]}(U_i\cup \hat{A}_i) \cup E \cup \bigcup\limits_{j\in [\emm]}(B_j\cup \hat{B}_j) \cup X\cup Y \cup \bigcup\limits_{c\in X\cup Y} F_c.\]
The size of the assigned alternatives is set to $k\coloneqq \emm-\binom{h}{2}+\enn+|X|+|Y| = \emm+\enn-\binom{h}{2}+\kk_1+\kk_2(h-2)$.

To the voter set and the collection of preference orders we add the following new voters,
where we use the notations $\seq{\cdot}$ and $\cdots$ analogously as in \cref{sec:proof-CC-W1h}.
\begin{itemize}[--]
  \item For each~$i\in [\enn]$ and each~$z\in [|X|]$, add to~$V$ a set~$\hat{V}^z_{i}$ of $h-1$ voters, all with the same preference order:
  \begin{align*}
    d_i\succ x_z \succ \seq{\hat{A}_z} \succ \cdots.
  \end{align*}
  Notice that there are exactly $|X|\cdot (h-1) = \kk_2\cdot (h-1)$ voters who each rank~$d_i$ ($i\in [\enn]$) at the first position.
  \item For each~$j\in [\emm]$ and each~$z\in [|Y|]$, add to~$V$ a voter, called~$\hat{w}^z_j$, with preference order:
  \begin{align*}
    \hat{w}_j^z\colon e_j\succ y_z \succ \seq{\hat{B}_z}\succ \cdots.
  \end{align*}
  Define $\hat{W}_j \coloneqq \{\hat{w}^z_j\mid z\in [|Y|]\}$.
  \item For each~$z\in [|X|]$, add a set~$\hat{V}^z_0$ of $L-h\cdot (h-1)$ special voters, all with the same preference order: 
  \begin{align*}
     x_z \succ \seq{F_{x_z}}\succ \cdots.
  \end{align*}
  \item For each~$z\in [|Y|]$, add a set~$\hat{W}_0^z$ of~$L-\binom{h}{2}$ special voters, all with the same preference order:
  \begin{align*}
    \hat{w}^z_0\colon y_z \succ \seq{F_{y_z}}\succ \cdots.
  \end{align*}
\end{itemize}
In total, the voter set is
\begin{align*}
  V\coloneqq & \{v_i^z\mid i\in [\enn], z\in [\kk_1]\}\cup \{w^z_j(u_i), w^z_j(u_{i'})\mid j \in [\emm] \text{ with } e_j=\{u_i,u_{i'}\}, z\in [\kk_2]\} \cup\\
  & \quad \bigcup\limits_{i\in [\enn],z\in [|X|]}\hat{V}^z_i \cup \bigcup\limits_{j\in [\emm]}\hat{W}_j \cup \bigcup\limits_{z\in [|X|]}\hat{V}_0^z \cup \bigcup\limits_{z\in [|Y|]}\hat{W}^z_{0}.
\end{align*}
This gives
\begin{align*}
  |V| & = \kk_1\cdot \enn+2\kk_2\cdot \emm+ |X|\cdot (h-1)\cdot \enn + |Y| \cdot \emm + (L-h\cdot (h-1)) \cdot |X| + (L-\binom{h}{2}) \cdot |Y|\\
  & = \kk_1 \cdot \enn + (2\kk_2+|Y|)\cdot \emm + |X|\cdot (h-1)\cdot \enn + L\cdot (|X|+|Y|) - \binom{h}{2}\cdot (2|X|+|Y|)\\
      & = L \cdot (\emm + \enn -\binom{h}{2} + |X|+|Y|) = L \cdot k,
\end{align*}
where the last inequality holds since $2|X|+|Y|=L$ and $|X|=\kk_2$.
This implies that each assigned alternative must represent exactly $L$ voters.
Let $\ppp'$ denote the extended preference profile.
The bound on the sum of misrepresentations is set to
$\bound \coloneqq \kk_1\cdot h+2\binom{h}{2}\cdot \kk_2 + \binom{h}{2}\cdot L$.
Finally, we set $\kk_1=1$ and $\kk_2=h$, which makes $\bound$ upper-bounded by $O(h^4)$ since $L=\kk_1+(h-1)\cdot \kk_2$.

It remains to show the correctness. More precisely, we show the following.
\begin{itemize}[--]
 \item If the input graph~$G$ admits a size-$h$ clique, then $\ppp'$ admits an assignment~$\repre\colon V \to C$ such that
 $|\repre(V)| = k$ and $\misr(\repre, \ppp') \le \bound$.
 \item  If $\ppp'$ admits an  assignment~$\repre\colon V \to C$ such that
 $|\repre(V)| = k$ and $\misr(\repre, \ppp) \le \bound$,
 then the input graph~$G$ admits a size-$h$ clique. %
\end{itemize}

The forward direction is again straightforward.
Assume that $G$ admits a size-$h$ clique~$U'\subset U$. 
Then, define an assignment~$\repre\colon V\to C$ as follows:
 \begin{itemize}[--]
   \item For each $i\in [\enn]$, if
   $u_i\in U\setminus U'$, then for each voter~$v\in \{v_i^r\mid r\in [\kk_1]\}\cup \bigcup\limits_{z\in [|X|]}\hat{V}_i^z$,
   assign $\repre(v)\coloneqq d_i$;
   otherwise, for each voter~$v\in \{v_i^r\mid r\in [\kk_1]\}$,
   assign $\repre(v)\coloneqq u_i$,
   and for each $z\in [|X|]$ and each voter~$\hat{v}\in \hat{V}_i^z$,
   assign~$\repre(\hat{v}) \coloneqq  x_z$.
   Note that in this way, we have that for each $i\in [\enn]$, either $|\repre^{-1}(d_i)|=0$ or $|\repre^{-1}(d_i)|=\kk_1+|X|\cdot (h-1) = L$ holds.

   \item For each edge~$e_j \in E$, let $u_i$ and $u_{i'}$ denote the endpoints of~$e_j$.
   Then, for each $z\in [\kk_2]$,
   if $e_j\in E(G[U'])$  (i.e., both endpoints of $e_j$ are in the clique), 
   then assign $\repre(e_j^z(u_i)) \coloneqq u_i$ and $\repre(e_j^z(u_{i'}))\coloneqq u_{i'}$ if $e_j\in E(G[U'])$,
   and for each $\hat{w}^z_j \in \hat{W}_j$,
   assign $\repre(\hat{w}^z_j) \coloneqq  y_z$;
   otherwise, assign   $\repre(e_j^z(u_i)) = \repre(e_j^z(u_{i'})) \coloneqq e_j$,
   and for each voter~$\hat{w}\in \hat{W}_j$
   assign $\repre(\hat{w}) \coloneqq e_j$.
   Note that for each $j \in [\emm]$,  either $|\repre^{-1}(e_j)|=0$ or $|\repre^{-1}(e_j)|=2\kk_2 + |Y| = L$ holds.
   \item For each $z\in [|X|]$ and each voter~$\hat{v}\in \hat{V}_0^z$, assign $\repre(v)\coloneqq x_z$, and for each $z\in [|Y|]$ and each voter~$\hat{w}\in \hat{W}_0^z$, assign $\repre(w)\coloneqq y_z$.
 \end{itemize}
 One can verify that $|\repre(V)| = k$ and that for each assigned alternative~$c\in \repre(V)$, it holds that
 $|\repre^{-1}(c)| = L$.
 Further, the sum of misrepresentations is exactly~$\bound$.
 
 For the backward direction, assume that  $\ppp'$ admits a desired proportional assignment.
 Let $\repre$ denote such an assignment with $|\repre(V)| = k$ such that for each assigned alternative~$c\in \repre(V)$, it holds that $|\repre^{-1}(c)| = L$,
 and for each other proportional assignment~$\repre'$ it holds that $\misr(\repre,\RR) \le \misr(\repre',\RR)$.

 By the size of the blocker alternatives and by the preferences of the voters, we infer the following.
 \begin{claim}\label{claim:relevant-alts2}
   $X\cup Y \subseteq \repre(V)$ holds and for each $v \in V$ it holds that $\rank_v(\repre(v)) \le 1$.
 \end{claim}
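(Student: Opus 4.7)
The plan is to follow the same strategy as \cref{claim:relevant-alts} but adapted to the proportionality constraint, proceeding in three stages.

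First, I would show that no blocker alternative lies in $\repre(V)$. Pick any blocker $a$ from one of the six families $A_i$, $\hat{A}_i$, $B_j$, $\hat{B}_j$, $F_{x_z}$, $F_{y_z}$, and suppose $a \in \repre(V)$. Proportionality forces $|\repre^{-1}(a)| = L$. By construction, $a$ appears explicitly in the preference order of only one small localized group of voters (respectively the vertex voters for $u_i$, the voters in $\bigcup_z \hat{V}_i^z$, the edge voters for $e_j$, the voters $\hat{w}_j^z$, the voters in $\hat{V}_0^z$, and the voters in $\hat{W}_0^z$); for every other voter $a$ sits inside the ``$\cdots$'' tail and has rank exceeding the corresponding blocker-set size, which in turn exceeds $\bound = O(h^4)$ because $\enn, \emm \gg h^3$. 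A short counting check using $\kk_1 = 1$, $\kk_2 = h$, and $L = 1 + h(h-1)$ verifies that in each case the cheap pool has strictly fewer than $L$ voters (for instance $|\hat{V}_0^z| = 1 < L$ for $F_{x_z}$, and only $2\kk_2$ voters cheaply rank any $B_j$). Thus at least one voter assigned to $a$ must contribute more than $\bound$ on its own, contradicting $\misr(\ppp', \repre) \le \bound$. Hence $\repre(V) \subseteq U \cup D \cup E \cup X \cup Y$.

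Second, I would prove $X \cup Y \subseteq \repre(V)$. If $x_z \notin \repre(V)$ for some $z$, then the (unique) voter in $\hat{V}_0^z$ has preference $x_z \succ \seq{F_{x_z}} \succ \cdots$; the first stage rules out all of $\{x_z\} \cup F_{x_z}$, so the voter is assigned an alternative at rank at least $|F_{x_z}| + 1 = \enn + \emm + 1 > \bound$, a contradiction. The case $y_z \notin \repre(V)$ is symmetric via $\hat{W}_0^z$.

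Third, I would conclude $\rank_v(\repre(v)) \le 1$ for every $v \in V$. By the first stage, $\repre(v) \in U \cup D \cup E \cup X \cup Y$. For each voter type, exactly its top two preferred positions lie in this set---namely $\{d_i, u_i\}$ for $v_i^z$, $\{e_j, u_i\}$ for $w_j^z(u_i)$, $\{d_i, x_z\}$ for voters in $\hat{V}_i^z$, $\{e_j, y_z\}$ for $\hat{w}_j^z$, $\{x_z\}$ for $\hat{V}_0^z$, and $\{y_z\}$ for $\hat{W}_0^z$---while every other member of $U \cup D \cup E \cup X \cup Y$ sits beyond the voter's private blocker block and so has rank exceeding $\bound$. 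Since the total misrepresentation cannot accommodate even a single such assignment, every voter is assigned at rank $0$ or $1$. The main obstacle I expect is the bookkeeping in the first stage: there are six blocker families and the size of the ``cheap'' voter pool differs for each, so one must verify case by case that the gap between that pool and $L$ forces a single voter's contribution to exceed the entire misrepresentation budget. Once that routine enumeration is in hand, the remaining two stages follow in essentially a single sentence each.
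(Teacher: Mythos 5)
Your proposal is correct and its engine is the same as the paper's: proportionality forces every selected alternative to represent exactly $L$ voters, while each blocker alternative is ranked within the first $\bound+1$ positions by strictly fewer than $L$ voters, so selecting a blocker would force some voter's individual misrepresentation above $\bound$. The paper states this argument only for the blockers $F_c$ backing the special voters in $\hat{V}_0^z\cup\hat{W}_0^r$ (which gives $X\cup Y\subseteq\repre(V)$) and then disposes of the rank bound with ``similar reasoning as \cref{claim:relevant-alts}'', i.e.\ by falling back on the exchange argument used in the CC proof. You instead run the same counting argument uniformly over all six blocker families up front; once $\repre(V)\subseteq U\cup D\cup E\cup X\cup Y$ is established, the rank bound follows from a pure budget argument (every alternative in that set other than a voter's top two sits behind that voter's private blocker block of size exceeding $\bound$), with no exchange argument needed. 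This buys a more self-contained and arguably cleaner proof of the second relation, at the cost of the six-case bookkeeping you flag; your per-family counts ($1$, $h(h-1)=L-1$, $2\kk_2$, $|Y|$, $L-h(h-1)=1$, $L-\binom{h}{2}$, all below $L$) check out, provided one reads the blocker blocks in the preferences of $\hat{V}_i^z$ and $\hat{w}_j^z$ as $\seq{\hat{A}_i}$ and $\seq{\hat{B}_j}$ (as you implicitly do, and as the construction clearly intends, rather than the $\seq{\hat{A}_z}$, $\seq{\hat{B}_z}$ literally written, under which the cheap pools would have size $\enn(h-1)$ and $\emm$ and the counting step would fail).
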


 \begin{proof}
   \renewcommand{\qedsymbol}{(of
     \cref{claim:relevant-alts2})~$\diamond$}
   All two relations are straightforward.
   To see why the first relation holds, we observe that if this were not the case,
   then there must exist a voter~$v\in \hat{V}_0^z \cup \hat{W}_0^{r}$, $z\in [|X|]$, $r\in [|Y|]$ which is represented by some blocker alternative, say $c$, since $\enn, \emm > \bound$.
   However, less than $L$ voters rank this alternative in the first $\bound+1$ positions,
   which implies that no proportional assignment can assign some voter to this alternative~$c$ without exceeding the sum of misrepresentations.
   
   Now that we have shown the first relation, the second relation follows by a similar reasoning as the one given for \cref{claim:relevant-alts}.
 \end{proof}

 By the preferences of the vertex voters, the above immediately implies the following,
 \begin{align}
   \label{eq:relev-alts2}
   \repre(V)\subseteq U \cup D \cup E \text{ with } |\repre(V)\cap (U\cup D\cup E)| = m-\binom{h}{2}+n.
 \end{align}

 As already discussed, by the size of $k$ and the number of voters,
 each assigned alternative must represent exactly $L$ voters.
 By \cref{claim:relevant-alts2}, each alternative~$x_z\in X$ must represent $h\cdot (h-1)$ voters from $\bigcup\limits_{i\in [\enn]}\hat{V}_i^z$.
 Similarly, each alternative~$y_z\in Y$ must represent $\binom{h}{2}$ voters from $\{\hat{w}_j^z\mid j \in [\emm]\}$.
 This yields a sum of misrepresentation of $|X|\cdot h\cdot (h-1)+|Y|\cdot \binom{h}{2}= L\cdot \binom{h}{2}$.
 By the definition of $\bound$, it follows that the remaining alternatives can yield a sum of misrepresentations of $h \cdot \kk_1 + 2\binom{h}{2}\cdot \kk_2$.
 Having established this, the remaining proof works analogously as the one given in \cref{thm:CC-W1h}.

\bibliographystyle{abbrvnat}

\bibliography{bib}

\end{document}